\documentclass[11pt]{amsart}
\usepackage{amssymb,latexsym}
\setlength{\unitlength}{1mm}

\topmargin 0 pt \textheight 46\baselineskip \advance\textheight by
\topskip \setlength{\parindent}{0pt} \setlength{\parskip}{5pt plus
2pt minus 1pt} \setlength{\textwidth}{155mm}
\setlength{\oddsidemargin}{5.6mm}
\setlength{\evensidemargin}{5.6mm}


\newtheorem{theorem}{Theorem}

\newtheorem{corollary}[theorem]{Corollary}

\newtheorem{remark}[theorem]{Remark}
\newtheorem{lemma}[theorem]{Lemma}

\newtheorem{example}[theorem]{Example}

\begin{document}

\pagenumbering{arabic}
\pagestyle{headings}
\def\sof{\hfill\rule{2mm}{2mm}}
\def\ls{\leq}
\def\gs{\geq}
\def\sl{\sum\limits}
\def\SS{S}
\def\SSS{S\hspace{-2pt}h}
\def\qq{{\bold q}}
\def\txx{{\frac1{2\sqrt{x}}}}
\def\AA{\mathcal{A}}
\def\aa{({\mathbf\alpha})}

\title{Some recursive formulas for Selberg-type integrals}
\author{Sergio Iguri}
\address{Instituto de Astronom\'{\i}a y F\'{\i}sica del Espacio (CONICET-UBA). \\ C.~C.~67 - Suc.~28, 1428 Buenos Aires, Argentina.}
\email{siguri@iafe.uba.ar}
\author{Toufik Mansour}
\address{Department of Mathematics, University of Haifa, Haifa 31905, Israel}
\address{Mathematical Science, G\"oteborg University and Chalmers University of Technology, S-412 96 G\"oteborg, Sweden}
\email{toufik@math.haifa.ac.il}

\begin{abstract}
A set of recursive relations satisfied by Selberg-type integrals involving monomial symmetric polynomials are derived, generalizing previous results in \cite{Ao,Ig}. These formulas provide a well-defined algorithm for computing Selberg-Schur integrals whenever the Kostka numbers relating Schur functions and the corresponding monomial polynomials are explicitly known. We illustrate the usefulness of our results discussing some interesting examples.
\end{abstract}

\maketitle
%
\medskip

{\sc 2000 Mathematics Subject Classification}: 33D70, 81T30, 81T40, 81V70

\section{Introduction}
The Selberg integral and its generalizations have played a central role both in pure and applied mathematics. Their applications run from the proof of the Mehta-Dyson conjecture and several cases of the Macdonald conjectures \cite{A,M,Mor} to the study of some $q$-analogues of constant term identities, through Calogero-Sutherland quantum many body models \cite{E,K,K2,Ka,S,W}, orthogonal polynomials theory \cite{O,St}, hyperplane arrangements \cite{Sch} and random matrix theory \cite{DT,F,KS,KLR}. They also have a deep connection to the Knizhnik–-Zamolodchikov equations \cite{MV,W2} with the corresponding implications in conformal field theory and even string theory \cite{DV,EFK,FSV,Ig2,Ig3,MT,TV,TK}. See \cite{F2} for a comprehensive review on the relevance of the Selberg integral and its applications.

The aim of this paper is to study the Selberg-type integral with the integrand dressed up with a symmetric function, namely, to study integrals of the form $J_{f}\equiv J^{(N)}(a,b,\rho;f)$:
\begin{equation}\label{eq1}
J_{f}=\int_{\Lambda}f(y_1,\ldots,y_N)\prod_{i=1}^Ny_i^{a-1}(1-y_i)^{b-1}\prod_{1\leq i<j\leq N}|y_i-y_j|^{2\rho}dy_1\wedge\cdots\wedge dy_N,
\end{equation}
where $f(y_1,\ldots,y_N)$ is a symmetric polynomial, the integral is taken over the $N$-dimensional open domain\footnote{Selberg-type integrals are sometimes defined over the $N$-dimensional simplex $\{(y_1,\ldots,y_N)\in\mathbb{R}^N\mid 0<y_1<\cdots<y_N<1\}$. From the symmetry of the integrand under a permutation of the variables we get that these integrals differ by a factor of $1/N!$.} $\Lambda=(0,1)^N$ and $a,b$ and $\rho$ are complex numbers. For simplicity, we will denote the function
$\prod_{i=1}^Ny_i^{a-1}(1-y_i)^{b-1}\prod_{1\leq i<j\leq N}|y_i-y_j|^{2\rho}$ by $\Phi(y)$, the $N$-form $dy_1\wedge\cdots\wedge dy_N$ by $dy$ and the polynomial $f(y_1,\ldots,y_N)$ by $f(y)$.

Among the basis for the space of symmetric polynomials, Schur basis plays a special role in this context. The importance of Selberg-Schur integrals was stated in \cite{TK} when studying the non-triviality of the integral representation of the intertwining operators between the Fock space representations of the Virasoro algebra and in \cite{B}, in a more general setting, when analyzing the Fock space resolutions of the $\widehat{sl}(n)$ irreducible highest-weight modules. As expected, they also appear when computing correlation functions on the sphere in related Wess-Zumino-Novikov-Witten models \cite{Ig2,Ig3}. Given a partition $\lambda$ we will denote  the Schur polynomial associated with it by $s_{\lambda}(y)$ and the corresponding Selberg-Schur integral by $J_{\lambda}$.

The case $\lambda=0$ corresponds to the classical integral considered by Selberg in \cite{Sel}. The analytic expression he found for this integral is:
\begin{eqnarray}
J_{0}=\int_\Lambda\Phi(y)dy=\prod_{i=1}^N\frac{\Gamma(a+(N-i)\rho)\Gamma(b+(N-i)\rho)\Gamma(i\rho+1)}{\Gamma(a+b+(2N-i-1)\rho)\Gamma(\rho+1)},
\end{eqnarray}
and it is well defined whenever $a$, $b$ and $\rho$ satisfy
\begin{eqnarray}
\Re(a),\Re(b)>0 \qquad \mbox{and} \qquad \Re(\rho)>-\min\left\{\frac{1}{N},\frac{\Re(a)}{N-1},\frac{\Re(b)}{N-1}\right\},
\end{eqnarray}
the second inequality having meaning for $N>1$. From now on we assume that these conditions always hold.

When $\lambda=(1^{m_1})$ with $0\le m_1\leq N$, Schur polynomials reduce to elementary symmetric polynomials, {\em i.e.},
\begin{eqnarray}
s_{(1^{m_1})}(y)\equiv e_{m_1}(y)=\frac{1}{N!}\binom{N}{{m_1}}\sum_{\sigma\in\SS_N}\prod_{i=1}^{m_1} y_{\sigma(i)},
\end{eqnarray}
where $\SS_N$ is the set of permutations of the set $\{1,2,\ldots,N\}$ and $e_0(y)=1$. In this case, Aomoto \cite{Ao} showed that
\begin{equation}
\label{eq2}
J_{(1^{m_1})}=\int_{\Lambda}e_{m_1}(y)\Phi(y)dy=J_0\binom{N}{{m_1}}\prod_{i=1}^{m_1}\frac{a+(N-i)\rho}{a+b+(2N-1-i)\rho}.
\end{equation}

A further extension of Selberg integral, by far the most general one, has been computed by Kadell in \cite{K3} and it involves Jack functions. It reads:
\begin{equation}
\label{eq3}
\int_{\Lambda}P_\lambda^{(1/\rho)}(y)\Phi(y)dy=J_0P_\lambda^{(1/\rho)}(1^N)\frac{[a+(N-1)\rho]_\lambda^{(\rho)}}{[a+b+2(N-1)\rho]_\lambda^{(\rho)}},
\end{equation}
where $\lambda$ is an arbitrary partition, $P_\lambda^{(1/\rho)}(y)$ is a Jack polynomial and $[a]_\lambda^{(\rho)}$ is a generalized Pochhammer symbol, which is defined as
\begin{eqnarray}
[a]_\lambda^{(\rho)}=\prod_{i\geq1}(a+(1-i)\rho)_{\lambda_i},
\end{eqnarray}
$(a)_n$ being the standard Pochhammer symbol, namely, $(a)_n=a(a+1)\cdots(a+n-1)$ with $(a)_0=1$. When $\rho=1$ we have $P_\lambda^{(1/\rho)}(y)=s_\lambda(y)$ so that (\ref{eq3}) gives
\begin{equation}\label{eq4}
J_{\lambda}=J_0s_\lambda(1^N)\frac{[a+(N-1)]_\lambda^{(1)}}{[a+b+2N-2]_\lambda^{(1)}}.
\end{equation}

More recently, it was proved in \cite{Ig} that for the case $\lambda=(2^{m_2}1^{m_1})$, $0 \le m_1 + m_2 \leq N$, one has
\begin{eqnarray}\label{eq5}
J_{(2^{m_2}1^{m_1})}=J_0m_\lambda(1^N)\dfrac{[a+(N-1)\rho]_\lambda^{(\rho)}}{[a+b+2(N-1)\rho]_\lambda^{(\rho)}}
\dfrac{[a+b+(N-2)\rho]_{(1^{m_2})}^{(\rho)}}{[a+b+(2N-m_1-m_2-2)\rho]_{(1^{m_2})}^{(\rho)}}\\
\qquad\qquad\times_4F_3\left[\begin{array}{c}-m_2, -N+m_1+m_2,\alpha+\beta+\gamma+2N-m_2-1,\alpha+N-m_2+1\\ \alpha+\beta+N-m_2-1,\alpha+\gamma+N-m_2,m_1+2\end{array}\right],\nonumber
\end{eqnarray}
where $\alpha=a/\rho$, $\beta=b/\rho$, $\gamma=1/\rho$, the hypergeometric series $_4F_3$ is evaluated at $1$ and $m_\lambda(y)$ denotes the monomial symmetric polynomial associated to the partition $\lambda$.

In this paper, using similar techniques as those employed in \cite{Ao,Ig}, we find a set of recursive formulas satisfied by generic Selberg-type integrals involving monomial polynomials. These recursions and the fact that Schur polynomials can be uniquely decomposed as linear combinations of monomial symmetric functions reduce the problem of computing (\ref{eq1}) to the problem of computing Kostka numbers while providing a well-defined algorithm for obtaining Selberg-Schur integrals in the general case.

The paper is organized as follows. After introducing some notation we prove in Section 2 some lemmas and preliminary propositions that will be useful for obtaining in Section 3 the recursive relations we have already announced. In Section 4 we illustrate the usefulness of our results with several relevant examples.

\section{Notation and preliminary lemmas}

In this section we fix our conventions, we introduce some notation and we derive several formulas that will be needed in order to prove our main results.

Despite of the fact that partitions are usually defined without trivial components, it will be useful for our purposes to identify partitions with length $\ell_{\lambda}\le N$ with decreasingly ordered $N$-tuples with non-negative entries by defining $\lambda_i=0$ for $i=\ell_{\lambda}+1,\dots,N$.

Given $v\in \mathbb{R}^N$, $v=(v_1,\ldots,v_N)$, we define its (decreasingly) ordered partner $[v]$ as the vector $(v_{\sigma(1)},\ldots,v_{\sigma(N)})$, where $\sigma \in \SS_N$ is any permutation satisfying $v_{\sigma(1)}\geq v_{\sigma(2)}\geq\cdots\geq v_{\sigma(N)}$. If $v_1,\ldots,v_N$ are all non-negative integer numbers, then $[v]$ actually defines a partition with length $\ell_{[v]}\le N$. We denote the standard basis of $\mathbb{R}^N$ by $\{e_1,\ldots,e_N\}$, $e_j$ being the $j$-th unit vector.

For a partition $\lambda$ let us denote by $\lambda'$ its conjugate so that $\lambda'_k$ gives the number of entries $\ge k$ in $\lambda$. Notice that $\ell_{\lambda}=\lambda'_1$. If $y=(y_1,\dots,y_N)$ we define
\begin{eqnarray}
y^{\lambda} = \prod_{j=1}^{N} y_j^{\lambda_j} = \prod_{i=0}^n \prod_{r=1}^{m_{n-i}} y^{n-i}_{\lambda'_{n-i+1}+r},
\end{eqnarray}
where $n$ is the greatest part of $\lambda$, $m_{k}=\lambda'_{k}-\lambda'_{k+1}$, $k=1,\dots,n$, is the multiplicity of the part $k$ in $\lambda$ and $m_0=N$. Further, let us define the following integrals:
\begin{equation}\label{eqa1}
B_{\lambda}=\int_{\Lambda}y^{\lambda}\Phi(y)dy,
\end{equation}
and, for any integer number $c\ge0$,
\begin{equation}\label{eqa2}
A_{\lambda}(k,c)=\int_{\Lambda}\frac{y_1^c \prod_{j=2}^N y_j^{\lambda_j}}{y_1-y_k}\Phi(y)dy,
\end{equation}
and
\begin{equation}
\label{eqa2b}
K_{\lambda}(c)=\int_{\Lambda}\frac{y_1^c \prod_{j=2}^N y_j^{\lambda_j}}{1-y_1}\Phi(y)dy.
\end{equation}
We will denote $A_{\lambda}(k,\lambda_1)$ simply by $A_{\lambda}(k)$.

We will generalize \cite[Lemma 1]{Ig} and \cite[Lemma 3]{Ig} by proving the following.

\begin{lemma}\label{lema3}
Let $\lambda$ be a partition such that $\ell_{\lambda}\leq N$ and let $c$ be a non-negative integer number. Let $2 \le k \le N$. Then,
\begin{eqnarray}
A_{\lambda}(k,c)=\left\{
\begin{array}{lll}
-\frac{1}{2}\sum\limits_{i=0}^{\lambda_k-1-c}B_{[\lambda+(c+i-\lambda_1)e_1-(1+i)e_k]}& \mbox{if}& c < \lambda_k, \\
0&\mbox{if}&c=\lambda_k, \\
\frac{1}{2}\sum\limits_{i=0}^{c-\lambda_k-1}B_{[\lambda+(c-1-i-\lambda_1)e_1+ie_k]}&\mbox{if}&c > \lambda_k.
\end{array}\right.
\end{eqnarray}
\end{lemma}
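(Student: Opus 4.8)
The plan is to exploit the invariance of both the domain $\Lambda=(0,1)^N$ and the weight $\Phi(y)$ under the transposition that swaps $y_1$ and $y_k$. First I would split off the factor $y_k^{\lambda_k}$ from the product in the numerator and write
\[
A_{\lambda}(k,c)=\int_{\Lambda}\frac{y_1^c\,y_k^{\lambda_k}}{y_1-y_k}\prod_{j=2,\,j\neq k}^{N}y_j^{\lambda_j}\,\Phi(y)\,dy.
\]
Performing the change of variables $y_1\leftrightarrow y_k$ leaves $\Lambda$, $\Phi(y)$ and the product over $j\neq 1,k$ unchanged, while it turns the remaining factor into $-y_1^{\lambda_k}y_k^c/(y_1-y_k)$. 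Adding the two resulting expressions for $A_\lambda(k,c)$ gives the antisymmetrized identity
\[
2A_{\lambda}(k,c)=\int_{\Lambda}\frac{y_1^c y_k^{\lambda_k}-y_1^{\lambda_k}y_k^c}{y_1-y_k}\prod_{j=2,\,j\neq k}^{N}y_j^{\lambda_j}\,\Phi(y)\,dy,
\]
whose numerator is now divisible by $y_1-y_k$, so that the integrand is a polynomial multiple of $\Phi(y)$.

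Next I would evaluate the quotient by means of the elementary identity $\frac{x^m-z^m}{x-z}=\sum_{i=0}^{m-1}x^{m-1-i}z^i$, treating the three cases separately. If $c=\lambda_k$ the numerator is identically zero, so $A_\lambda(k,c)=0$. If $c<\lambda_k$, factoring out $y_1^c y_k^c$ leaves $(y_k^{\lambda_k-c}-y_1^{\lambda_k-c})/(y_1-y_k)$, and the identity yields
\[
\frac{y_1^c y_k^{\lambda_k}-y_1^{\lambda_k}y_k^c}{y_1-y_k}=-\sum_{i=0}^{\lambda_k-c-1}y_1^{\lambda_k-1-i}y_k^{c+i}.
\]
If $c>\lambda_k$, factoring out $y_1^{\lambda_k}y_k^{\lambda_k}$ leaves $(y_1^{c-\lambda_k}-y_k^{c-\lambda_k})/(y_1-y_k)$, giving instead $\sum_{i=0}^{c-\lambda_k-1}y_1^{c-1-i}y_k^{\lambda_k+i}$.

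Finally, each monomial produces an integral of the form $\int_\Lambda y^\mu\Phi(y)\,dy$, where $\mu$ differs from $\lambda$ only in its first and $k$-th entries. Because $\Phi$ and $\Lambda$ are symmetric, this integral equals $B_{[\mu]}$, with $[\mu]$ the decreasing rearrangement of $\mu$. Reading off $\mu$ in the case $c>\lambda_k$ gives directly the shift $\lambda+(c-1-i-\lambda_1)e_1+ie_k$, while in the case $c<\lambda_k$ the reindexing $i\mapsto\lambda_k-1-c-i$ (equivalently, the observation that swapping the $e_1$ and $e_k$ contributions does not change $[\mu]$) turns my exponents into the stated shift $\lambda+(c+i-\lambda_1)e_1-(1+i)e_k$; the prefactor $\tfrac12$ is supplied by dividing by $2$.

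The step I expect to demand the most care is the symmetrization itself. The factor $1/(y_1-y_k)$ is singular on the diagonal $y_1=y_k$, where $\Phi$ contributes only $|y_1-y_k|^{2\rho}$, so strictly speaking one must first make sense of $A_\lambda(k,c)$ (as a convergent integral for $\Re(\rho)$ large, or as a principal value) and justify that the interchange of variables is legitimate before the antisymmetric combination cancels the singularity. Once the numerator is seen to be a multiple of $y_1-y_k$, the integrand is polynomial in $y_1$ and $y_k$ and the remaining manipulations are entirely formal.
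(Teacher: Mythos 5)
Your proposal is correct and follows essentially the same route as the paper: swap $y_1\leftrightarrow y_k$ using the symmetry of $\Phi$ and $\Lambda$, antisymmetrize to make the numerator divisible by $y_1-y_k$, expand via $\frac{x^m-z^m}{x-z}=\sum_{i}x^{m-1-i}z^{i}$, and identify each resulting monomial integral as a $B_{[\mu]}$. Your closing remark about first making sense of the singular integral before symmetrizing is a legitimate point of rigor that the paper passes over silently.
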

\begin{proof}
Exchanging $y_k$ and $y_1$ in (\ref{eqa2}) and then using the symmetry of Selberg's kernel $\Phi(y)$ under the permutation of any pair of variables, we obtain
\begin{eqnarray}
A_{\lambda}(k,c)=-\int_{\Lambda}\frac{y_1^{\lambda_k}y_k^c\prod_{j\neq 1,k}^{N} y_j^{\lambda_j}}{y_1-y_k}\Phi(y)dy.
\end{eqnarray}

Thus, if $0\leq c < \lambda_k$, we get
\begin{eqnarray}
A_{\lambda}(k,c)=-\frac{1}{2}\int_{\Lambda}\frac{y_1^cy_k^c(y_1^{\lambda_k-c}-y_k^{\lambda_k-c})\prod_{j\neq 1,k}^{N} y_j^{\lambda_j}}{y_1-y_k}\Phi(y)dy,
\end{eqnarray}
which is equivalent to
\begin{eqnarray}
A_{\lambda}(k,c)=-\frac{1}{2}\sum_{i=0}^{\lambda_k-c-1}B_{[\lambda+(c+i-\lambda_1)e_1-(1+i)e_k]},
\end{eqnarray}
where we have used
\begin{eqnarray}
y_1^{\lambda_k-c}-y_k^{\lambda_k-c} = (y_1-y_k) \sum_{i=0}^{\lambda_k-c-1} y_1^{\lambda_k-c-1-i}y_k^{i}.
\end{eqnarray}

When $c=\lambda_k$ it is straightforward to see that integral (\ref{eqa2}) vanishes.

If, instead, $c>\lambda_k$, then
\begin{eqnarray}
A_{\lambda}(k,c)=\frac{1}{2}\int_{\Lambda}\frac{y_1^{\lambda_k}y_k^{\lambda_k}(y_1^{c-\lambda_k}-y_k^{c-\lambda_k})\prod_{j\neq 1,k}^{N} y_j^{\lambda_j}}{y_1-y_k}\Phi(y)dy,
\end{eqnarray}
namely,
\begin{eqnarray}
A_{\lambda}(k,c)=\frac{1}{2}\sum_{i=0}^{c-\lambda_k-1}B_{[\lambda+(c-1-i-\lambda_1)e_1+ie_k]},
\end{eqnarray}
as we wanted to prove.
\end{proof}

\begin{corollary}
\label{coro1}
Let $\lambda$ be a partition with $\ell_{\lambda}\leq N$ and let $2 \le k \le N$. Then,
\begin{eqnarray}
A_{\lambda}(k)=\frac{1}{2}\sum_{i=0}^{\lambda_1-\lambda_k-1}B_{[\lambda-(1+i)e_1+ie_k]}.
\end{eqnarray}
\end{corollary}

\begin{example}
\label{ex3}
If $\lambda=(2^{m_2}1^{m_1})$, $m_1+m_2\leq N$, then
\begin{equation}\label{eq21a3bis}
A_{(2^{m_2}1^{m_1})}(k)=\left\{
\begin{array}{lll}
0&\mbox{if}&2\leq k\leq \lambda'_2,\\
\frac{1}{2}B_{(2^{m_2-1}1^{m_1+1})}&\mbox{if}&\lambda'_2+1\leq k\leq \lambda'_1,\\
B_{(2^{m_2-1}1^{m_1+1})}&\mbox{if}&\lambda'_1+1\leq k\leq N,
\end{array}\right.
\end{equation}
while Lemma~\ref{lema3} gives, for $c=1$,
\begin{equation}\label{eq21a3}
A_{(2^{m_2}1^{m_1})}(k,1)=\left\{
\begin{array}{lll}
-\frac{1}{2}B_{(2^{m_2-2}1^{m_1+2})}&\mbox{if}&2\leq k\leq \lambda'_2,\\
0&\mbox{if}&\lambda'_2+1\leq k\leq \lambda'_1,\\
\frac{1}{2}B_{(2^{m_2-1}1^{m_1-1})}&\mbox{if}&\lambda'_1+1\leq k\leq N.
\end{array}\right.
\end{equation}
as it was already shown in \cite[Lemma 1]{Ig} and \cite[Lemma 3]{Ig}, respectively. Furthermore, we find
\begin{equation}\label{eq21a3t}
A_{(2^{m_2}1^{m_1})}(k,0)=\left\{
\begin{array}{lll}
-B_{(2^{m_2-2}1^{m_1+1})}&\mbox{if}&2\leq k\leq \lambda'_2,\\
-\frac{1}{2}B_{(2^{m_2-1}1^{m_1-1})}&\mbox{if}&\lambda'_2+1\leq k\leq \lambda'_1,\\
0&\mbox{if}&\lambda'_1+1\leq k\leq N.
\end{array}\right.
\end{equation}
\end{example}

\begin{example}
If $\lambda=(3^{m_3} 2^{m_2} 1^{m_1})$, $\ell_{\lambda} \leq N$, then
\begin{equation}\label{eq321a3c1bis}
A_{(3^{m_3} 2^{m_2} 1^{m_1})}(k)=\left\{
\begin{array}{lll}
0&\mbox{if}&2\leq k\leq \lambda'_3,\\
\frac{1}{2}B_{(3^{m_3-1}2^{m_2+1}1^{m_1})}&\mbox{if}&\lambda'_3+1\leq k\leq \lambda'_2,\\
B_{(3^{m_3-1}2^{m_2+1}1^{m_1})}&\mbox{if}&\lambda'_2+1\leq k\leq \lambda'_1,\\
B_{(3^{m_3-1}2^{m_2+1}1^{m_1})}\\ \qquad \quad +\frac{1}{2}B_{(3^{m_3-1}2^{m_2}1^{m_1+2})}&\mbox{if}&\lambda'_1+1\leq k\leq N,
\end{array}\right.
\end{equation}
and Lemma~\ref{lema3} gives
\begin{equation}
\label{eq321a3c2}
A_{(3^{m_3} 2^{m_2} 1^{m_1})}(k,2)=\left\{
\begin{array}{lll}
-\frac{1}{2}B_{(3^{m_3-2}2^{m_2+2}1^{m_1})}&\mbox{if}&2\leq k\leq \lambda'_3,\\
0&\mbox{if}&\lambda'_3+1\leq k\leq \lambda'_2,\\
\frac{1}{2}B_{(3^{m_3-1}2^{m_2}1^{m_1+1})}&\mbox{if}&\lambda'_2+1\leq k\leq \lambda'_1,\\
B_{(3^{m_3-1}2^{m_2}1^{m_1+1})}&\mbox{if}&\lambda'_1+1\leq k\leq N,
\end{array}\right.
\end{equation}
\begin{equation}
\label{eq321a3c1}
A_{(3^{m_3} 2^{m_2} 1^{m_1})}(k,1)=\left\{
\begin{array}{lll}
-B_{(3^{m_3-2}2^{m_2+1}1^{m_1+1})}&\mbox{if}&2\leq k\leq \lambda'_3,\\
-\frac{1}{2}B_{(3^{m_3-1}2^{m_2-1}1^{m_1+2})}&\mbox{if}&\lambda'_3+1\leq k\leq \lambda'_2,\\
0&\mbox{if}&\lambda'_2+1\leq k\leq \lambda'_1,\\
\frac{1}{2}B_{(3^{m_3-1}2^{m_2}1^{m_1})}&\mbox{if}&\lambda'_1+1\leq k\leq N,
\end{array}\right.
\end{equation}
and
\begin{equation}
\label{eq321a3c1ot}
A_{(3^{m_3} 2^{m_2} 1^{m_1})}(k,0)=\left\{
\begin{array}{lll}
-B_{(3^{m_3-2}2^{m_2+1}1^{m_1})}\\ \qquad \quad-\frac{1}{2}B_{(3^{m_3-2}2^{m_2}1^{m_1+2})}&\mbox{if}&2\leq k\leq \lambda'_3,\\
-B_{(3^{m_3-1}2^{m_2-1}1^{m_1+1})}&\mbox{if}&\lambda'_3+1\leq k\leq \lambda'_2,\\
-\frac{1}{2}B_{(3^{m_3-1}2^{m_2}1^{m_1-1})}&\mbox{if}&\lambda'_2+1\leq k\leq \lambda'_1,\\
0&\mbox{if}&\lambda'_1+1\leq k\leq N.
\end{array}\right.
\end{equation}
\end{example}

Concerning integrals (\ref{eqa2b}) we can prove the following two lemmas.

\begin{lemma}
\label{lema2}
Let $\lambda$ be any partition with $\ell_{\lambda}\leq N$. Thus, for $0\leq c\leq\lambda_1$,
\begin{equation}
K_\lambda(c)=K_\lambda(0)-\sum_{i=0}^{c-1}B_{[\lambda+(i-\lambda_1)e_1]}.
\end{equation}
\end{lemma}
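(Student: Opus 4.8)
The plan is to derive a one-step recursion in the parameter $c$ and then telescope down to $c=0$. First I would form the difference of two consecutive values, which is legitimate since all the integrals involved are assumed to be well-defined under the standing hypotheses on $a,b,\rho$:
\[
K_\lambda(c)-K_\lambda(c-1)=\int_\Lambda\frac{(y_1^c-y_1^{c-1})\prod_{j=2}^N y_j^{\lambda_j}}{1-y_1}\Phi(y)\,dy .
\]
The crucial algebraic observation is the factorization $y_1^c-y_1^{c-1}=-y_1^{c-1}(1-y_1)$, so that the factor $1-y_1$ cancels the denominator exactly and the singular kernel disappears, leaving a genuine Selberg-type monomial integral.

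After the cancellation the integrand becomes $-y_1^{c-1}\prod_{j=2}^N y_j^{\lambda_j}\Phi(y)$, that is, $-y^{\mu}\Phi(y)$ with the exponent vector $\mu=(c-1,\lambda_2,\ldots,\lambda_N)=\lambda+(c-1-\lambda_1)e_1$. Since $\Phi(y)$ is symmetric under any permutation of the variables and $\Lambda=(0,1)^N$ is permutation-invariant, choosing a permutation that decreasingly orders $\mu$ gives $\int_\Lambda y^{\mu}\Phi(y)\,dy=B_{[\mu]}$, where $[\,\cdot\,]$ denotes the ordered partner introduced earlier. Hence
\[
K_\lambda(c)-K_\lambda(c-1)=-B_{[\lambda+(c-1-\lambda_1)e_1]}.
\]

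Finally I would telescope this relation from $c'=1$ to $c'=c$, obtaining $K_\lambda(c)-K_\lambda(0)=-\sum_{c'=1}^{c}B_{[\lambda+(c'-1-\lambda_1)e_1]}$, and reindex with $i=c'-1$ to recover the stated identity. I do not expect a serious obstacle: the argument is an elementary telescoping, and the only point requiring a little care is the bookkeeping of the shifted exponent vector and its reordering through $[\,\cdot\,]$. Here the hypothesis $0\le c\le\lambda_1$ ensures that the first entry $c-1$ runs over $0,1,\ldots,c-1$ and thus stays nonnegative, so each $[\mu]$ is again a genuine partition of length at most $N$ and every term $B_{[\mu]}$ is well-defined.
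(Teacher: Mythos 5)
Your proof is correct and is essentially the paper's argument in telescoped form: the paper substitutes the identity $\frac{y_1^c}{1-y_1}=\frac{1}{1-y_1}-\sum_{i=0}^{c-1}y_1^i$ into the definition of $K_\lambda(c)$ all at once, while you establish the same geometric-sum cancellation of the factor $1-y_1$ one step at a time and then telescope. The two computations are identical in substance, so no changes are needed.
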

\begin{proof}
The proof of the lemma follows straightforwardly after using the substitution
\begin{eqnarray}
\frac{y_1^c}{1-y_1}=\frac{1}{1-y_1}-\sum_{i=0}^{c-1}y_1^i
\end{eqnarray}
in Eq.~(\ref{eqa2b}).
\end{proof}

\begin{lemma}\label{lema4}
Let $\lambda$ be any partition with $\ell_{\lambda}\leq N$. For $0\leq c\leq\lambda_1$ we have
\begin{eqnarray}\label{ll4}
K_\lambda(c)=\frac{2\rho}{b-1}\sum_{k=2}^N A_{\lambda}(k,c)+\frac{a-1+c}{b-1}B_{[\lambda+(c-1-\lambda_1)e_1]}.
\end{eqnarray}
\end{lemma}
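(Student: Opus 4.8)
The plan is to exploit the standard fact that the integral over $\Lambda=(0,1)^N$ of a total $y_1$-derivative of a function decaying at the boundary vanishes. Concretely, I would integrate the total derivative with respect to $y_1$ of the quantity $y_1^c\prod_{j=2}^N y_j^{\lambda_j}\,\Phi(y)$ over $\Lambda$. Writing $P(y)=\prod_{j=2}^N y_j^{\lambda_j}$, which is independent of $y_1$, and using the logarithmic derivative
\[
\frac{\partial}{\partial y_1}\log\Phi(y)=\frac{a-1}{y_1}-\frac{b-1}{1-y_1}+2\rho\sum_{k=2}^N\frac{1}{y_1-y_k},
\]
which follows at once from the explicit form of $\Phi(y)$, one obtains
\[
\frac{\partial}{\partial y_1}\Bigl[y_1^c P(y)\Phi(y)\Bigr]=P(y)\Phi(y)\Bigl[(a-1+c)\,y_1^{c-1}-(b-1)\frac{y_1^c}{1-y_1}+2\rho\sum_{k=2}^N\frac{y_1^c}{y_1-y_k}\Bigr].
\]

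The key step is then to observe that the left-hand side integrates to zero over $\Lambda$, so that integrating the right-hand side term by term and matching against the definitions produces the desired identity. The first term gives $(a-1+c)\,B_{[\lambda+(c-1-\lambda_1)e_1]}$, once one notes that $y_1^{c-1}\prod_{j=2}^N y_j^{\lambda_j}=y^{\mu}$ with $\mu=\lambda+(c-1-\lambda_1)e_1$, and that $B_\mu=B_{[\mu]}$ by the invariance of $\Phi(y)$ under permutations of the variables together with the symmetry of the domain. The second term is exactly $-(b-1)K_\lambda(c)$ by (\ref{eqa2b}), and the third is $2\rho\sum_{k=2}^N A_\lambda(k,c)$ by (\ref{eqa2}). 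Setting the resulting sum equal to zero and solving for $K_\lambda(c)$ yields precisely (\ref{ll4}).

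The main point requiring care is the vanishing of the boundary contributions, i.e.\ that $y_1^c P(y)\Phi(y)\to 0$ as $y_1\to 0^+$ and as $y_1\to 1^-$. Near $y_1=0$ the factor behaves like $y_1^{a-1+c}$, which vanishes since $\Re(a)>0$ and $c\ge 0$; near $y_1=1$ it behaves like $(1-y_1)^{b-1}$, which vanishes provided $\Re(b)>1$. I expect this to be the only genuine obstacle: the computation is valid outright in the parameter region $\Re(a)>0$, $\Re(b)>1$, where all three integrals converge and the boundary terms are controlled, and the identity then extends to the full admissible range by analytic continuation in $a$, $b$ and $\rho$. The appearance of the shifted and possibly reordered index vector $[\lambda+(c-1-\lambda_1)e_1]$ is absorbed throughout by the permutation symmetry of the Selberg kernel.
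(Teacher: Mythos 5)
Your proposal is correct and follows essentially the same route as the paper: both apply Stokes' theorem (integration by parts in $y_1$) to $y_1^c\prod_{j=2}^N y_j^{\lambda_j}\,\Phi(y)$, use the logarithmic derivative of $\Phi$, and identify the three resulting terms with $B_{[\lambda+(c-1-\lambda_1)e_1]}$, $K_\lambda(c)$ and $A_\lambda(k,c)$. Your treatment is in fact slightly more careful than the paper's, since you spell out the boundary analysis and the analytic continuation, and your logarithmic derivative correctly carries $\frac{a-1}{y_1}$ where the paper's displayed formula has a typo ($\frac{a}{y_1}$) that is nonetheless consistent with its final result.
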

\begin{proof}
Since $\Phi(y)$ vanishes at the boundary values $y_1=0$ and $y_1=1$, we obtain after applying Stokes' theorem,
\begin{eqnarray}
0&=&\int_{\Lambda}d_1\left(y_1^c\prod_{j=2}^\ell y_j^{\lambda_j}\Phi(y)dy'\right) \\
&=& 2\rho\sum_{k=2}^N A_{\lambda}(k,c)+(a-1+c)B_{[\lambda+(c-1-\lambda_1)e_1]}-(b-1)K_\lambda(c),\nonumber\label{l4}
\end{eqnarray}
which follows from the fact that
\begin{eqnarray}
d_1 \Phi(y) = \frac{a}{y_1}-\frac{b-1}{1-y_1}+2\rho\sum_{k=2}^N\frac{1}{y_1-y_k}.
\end{eqnarray}
Eq.~(\ref{ll4}) follows from (\ref{l4}).
\end{proof}

The following example essentially reproduces the derivation of the recurrence found in \cite{Ig} for Selberg integrals involving symmetric monomial polynomials associated to partitions with entries $\le 2$, namely, \cite[Lemma 4]{Ig}.

\begin{example}\label{ex7}
Let $\lambda=(2^{m_2}1^{m_1})$ and $c=2$. Lemma~\ref{lema3} gives
\begin{eqnarray}
(b-1)K_{(2^{m_2}1^{m_1})}(2)=2\rho\sum_{k=2}^N A_{\lambda}(k,2)+(a+1)B_{(2^{m_2-1}1^{m_1+1})}.
\end{eqnarray}
By virtue of Example \ref{ex3} we get
\begin{equation}
(b-1)K_{(2^{m_2}1^{m_1})}(2)=(a+1+\rho(2N-m_1-2m_2))B_{(2^{m_2-1}1^{m_1+1})}.
\end{equation}
Using Lemma~\ref{lema2} we find
\begin{eqnarray}
(b-1)\left(K_{(2^{m_2}1^{m_1})}(0)-B_{(2^{m_2-1}1^{m_1})}-B_{(2^{m_2-1}1^{m_1+1})}\right) \\
= (a+1+\rho(2N-2m_2-m_1))B_{(2^{m_2-1}1^{m_1+1})},\nonumber
\end{eqnarray}
which is equivalent to
\begin{eqnarray}\label{eqex21a1}
(b-1)K_{(2^{m_2}1^{m_1}}(0) = (a+b+\rho(2N-m_1-2m_2))B_{(2^{m_2-1}1^{m_1+1})} \\
 +(b-1)B_{(2^{m_2-1}1^{m_1})},\nonumber
\end{eqnarray}
as proved in \cite[lemma 2]{Ig}.

In a similar way, for $c=1$, Lemma~\ref{lema3} and Lemma~\ref{lema2} give
\begin{eqnarray}
\qquad (b-1)\left(K_{(2^{m_2}1^{m_1})}(0)-B_{(2^{m_2-1}1^{m_1})}\right) = 2\rho\sum_{k=2}^N A_{\lambda}(k,1)+a B_{(2^{m_2-1}1^{m_1})},
\end{eqnarray}
which is equivalent to
\begin{eqnarray}\label{eqex21a2}
(b-1)K_{(2^{m_2}1^{m_1})}(0)=-\rho(m_2-1)B_{(2^{m_2-2}1^{m_1+2})}\\
+(a+b-1+\rho(N-m_1-m_2))B_{(2^{m_2-1}1^{m_1})}.\nonumber
\end{eqnarray}

After combining (\ref{eqex21a1}) and (\ref{eqex21a2}) we obtain
\begin{eqnarray}\label{formula}
(a+b+\rho(2N-2m_2-m_1))B_{(2^{m_2-1}1^{m_1+1})}=(a+\rho(N-m_1-m_2))B_{(2^{m_2-1}1^{m_1})}\\
-\rho(m_2-1)B_{(2^{m_2-2}1^{m_1+2})},\nonumber
\end{eqnarray}
as it is proved in \cite[Lemma 4, Equation (13)]{Ig}.
\end{example}

\section{Recurrence relations for Selberg-type integrals}

Formula (\ref{formula}) defines a recursive relation that was used in \cite{Ig} for computing Selberg-Schur integrals associated to partitions of the form $\lambda=(2^{m_2}1^{m_1})$, $0 \le m_1+m_2 \le N$. In this section we find a set of recurrence relations satisfied by Selberg integrals involving monomial polynomials generalizing \cite[Lemma 4, Equation (13)]{Ig}.

\begin{theorem}\label{thm}
Let $\lambda$ be any partition of length $\ell_{\lambda}<N$. Then, for any $c$ such that $0 \leq c < \lambda_1$ we have
\begin{eqnarray}\label{eqB}
(b-1)\sum\limits_{i=c}^{\lambda_1-1}B_{[\lambda+(i-\lambda_1)e_1]}+(a-1+\lambda_1)B_{[\lambda-e_1]}
-(a-1+c)B_{[\lambda+(c-1-\lambda_1)e_1]} \\
= \rho\sum_{k=2}^N (-1)^{\delta_{\lambda_k < c}} \sum_{i=1}^{\max\{\lambda_k,c\}-\min\{\lambda_k,c\}} B_{[\lambda+(\max\{\lambda_k,c\}-i-\lambda_1)e_1+(\min\{\lambda_k,c\}+i-1-\lambda_k)e_k]}\nonumber \\
-\rho\sum_{k=2}^N \sum_{i=1}^{\lambda_1-\lambda_k} B_{[\lambda-ie_1+(i-1)e_k]},\nonumber
\end{eqnarray}
where $\delta_{a<b}$ equals $0$ if $a<b$ and it equals $1$ otherwise.
\end{theorem}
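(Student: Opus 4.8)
The plan is to combine the three preliminary results---Lemma~\ref{lema4}, Lemma~\ref{lema2}, and Corollary~\ref{coro1} together with Lemma~\ref{lema3}---exactly as Example~\ref{ex7} does for $\lambda=(2^{m_2}1^{m_1})$, but now keeping $\lambda$ and $c$ generic. The guiding idea is to use the auxiliary integral $K_\lambda$ as a bridge: Lemma~\ref{lema4} expresses $K_\lambda(c)$ through the $A_\lambda(k,c)$, while Lemma~\ref{lema2} expresses the \emph{difference} of two such $K$'s purely through the $B$'s. Eliminating $K_\lambda$ between these two descriptions will produce the desired relation among the $B$'s alone.

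Concretely, first I would write Lemma~\ref{lema4} twice, once at $c=\lambda_1$ (recalling the convention $A_\lambda(k,\lambda_1)=A_\lambda(k)$) and once at the given $c<\lambda_1$, and subtract. On the left the difference $K_\lambda(\lambda_1)-K_\lambda(c)$ collapses, by Lemma~\ref{lema2}, to $-\sum_{i=c}^{\lambda_1-1}B_{[\lambda+(i-\lambda_1)e_1]}$, the common term $K_\lambda(0)$ cancelling. Multiplying through by $b-1$ and transposing this sum yields exactly the left-hand side of~(\ref{eqB}),
\[
(b-1)\sum_{i=c}^{\lambda_1-1}B_{[\lambda+(i-\lambda_1)e_1]}+(a-1+\lambda_1)B_{[\lambda-e_1]}-(a-1+c)B_{[\lambda+(c-1-\lambda_1)e_1]},
\]
now set equal to $-2\rho\sum_{k=2}^N\bigl(A_\lambda(k,\lambda_1)-A_\lambda(k,c)\bigr)$.

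It then remains to expand the two families of $A$'s into $B$'s and reconcile the indexing with the right-hand side of~(\ref{eqB}). The term $-2\rho\sum_k A_\lambda(k,\lambda_1)$ is handled by Corollary~\ref{coro1}: substituting $A_\lambda(k)=\frac12\sum_{i=0}^{\lambda_1-\lambda_k-1}B_{[\lambda-(1+i)e_1+ie_k]}$ and relabelling the summation index reproduces the last double sum $-\rho\sum_k\sum_{i=1}^{\lambda_1-\lambda_k}B_{[\lambda-ie_1+(i-1)e_k]}$. For the remaining term $+2\rho\sum_k A_\lambda(k,c)$ I would invoke Lemma~\ref{lema3} in its three regimes: when $c=\lambda_k$ the contribution vanishes, matching the empty sum obtained from $\max\{\lambda_k,c\}-\min\{\lambda_k,c\}=0$; when $c>\lambda_k$ (so $\delta_{\lambda_k<c}=0$) the factor $+\frac12$ and a relabelling give the $(-1)^0$ branch; and when $c<\lambda_k$ (so $\delta_{\lambda_k<c}=1$) the factor $-\frac12$ together with the reflection $i\mapsto\lambda_k-c-i$ of the summation index produces the $(-1)^1$ branch, that reflection being precisely what converts $B_{[\lambda+(c+i-\lambda_1)e_1-(1+i)e_k]}$ into $B_{[\lambda+(\lambda_k-i-\lambda_1)e_1+(c+i-1-\lambda_k)e_k]}$.

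The conceptual content is a one-line elimination; the main obstacle is purely bookkeeping. The delicate point is checking that the three cases of Lemma~\ref{lema3} can be packaged uniformly under the single $\max/\min$ expression carrying the sign $(-1)^{\delta_{\lambda_k<c}}$, which forces careful attention to the change of summation variable in the $c<\lambda_k$ case and to the degenerate $c=\lambda_k$ case. One should also verify that the hypothesis $\ell_\lambda<N$ guarantees that every ordered partner $[\,\cdot\,]$ appearing above is a genuine partition of length $\le N$, so that each $B$ is one of the integrals~(\ref{eqa1}) already under consideration.
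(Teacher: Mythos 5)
Your proposal is correct and follows essentially the same route as the paper: the authors likewise combine Lemma~\ref{lema4} with Lemma~\ref{lema2} at $c=\lambda_1$ and at the given $c$ to obtain two expressions for $(b-1)K_\lambda(0)$, equate them (which is the same elimination as your subtraction), and then expand $A_\lambda(k)$ via Corollary~\ref{coro1} and $A_\lambda(k,c)$ via the three cases of Lemma~\ref{lema3}. Your index reflection $i\mapsto\lambda_k-c-i$ in the $c<\lambda_k$ branch correctly reconciles the bookkeeping with the $\max/\min$ packaging of~(\ref{eqB}).
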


\begin{remark}
Before proving the Theorem let us emphasize that (\ref{eqB}) is actually a well defined recurrence for $c < \lambda_1$ for the dominance ordering on partitions, namely, all partitions appearing in (\ref{eqB}) are $\preceq \lambda$.
\end{remark}

\begin{proof}
The proof of the theorem follows the same steps as Example \ref{ex7}. Using Lemma~\ref{lema2} for $c=\lambda_1$ we obtain
\begin{eqnarray}
\qquad (b-1)\left(K_\lambda(0)-\sum_{i=0}^{\lambda_1-1}B_{[\lambda+(i-\lambda_1)e_1]}\right) = 2\rho\sum_{k=2}^N A_{\lambda}(k)+(a-1+\lambda_1)B_{[\lambda-e_1]},
\end{eqnarray}
which is equivalent to
\begin{eqnarray}\label{eqth1}
(b-1)K_\lambda(0)=(b-1)\sum\limits_{i=0}^{\lambda_1-1}B_{[\lambda+(i-\lambda_1)e_1]}+(a-1+\lambda_1)B_{[\lambda-e_1]}\\
+\rho\sum\limits_{k=2}^N\sum\limits_{i=1}^{\lambda_1-\lambda_k}B_{[\lambda-ie_1+(i-1)e_k]}.\nonumber
\end{eqnarray}

On the other hand, Lemma~\ref{lema2} for an arbitrary $0 \le c < \lambda_1$ gives
\begin{eqnarray}
\qquad (b-1)\left(K_\lambda(0)-\sum_{i=0}^{c-1}B_{[\lambda+(i-\lambda_1)e_1]}\right)
=2\rho\sum_{k=2}^N A_{\lambda}(k,c)+(a-1+c)B_{[\lambda+(c-1-\lambda_1)e_1]},
\end{eqnarray}
and by Lemma~\ref{lema3} it follows that
\begin{eqnarray}
\sum\limits_{k=2}^N A_{\lambda}(k,c)=-\frac{1}{2}\sum\limits_{k=2,c\leq\lambda_k}^N \sum\limits_{i=0}^{\lambda_k-1-c}B_{[\lambda+(c+i-\lambda_1)e_1-(1+i)e_k]}\\
+\frac{1}{2}\sum\limits_{k=2,c>\lambda_k}^N\sum\limits_{i=0}^{c-\lambda_k-1}B_{[\lambda+(c-1-i-\lambda_1)e_1+ie_k]},\nonumber
\end{eqnarray}
so that
\begin{eqnarray}\label{eqth2}
(b-1)K_\lambda(0)=(b-1)\sum\limits_{i=0}^{c-1}B_{[\lambda+(i-\lambda_1)e_1]}+(a-1+c)B_{[\lambda+(c-1-\lambda_1)e_1]}\\
-\rho\sum\limits_{k=2,c\leq\lambda_k}^N \sum\limits_{i=0}^{\lambda_k-1-c}B_{[\lambda+(c+i-\lambda_1)e_1-(1+i)e_k]}
+\rho\sum\limits_{k=2,c>\lambda_k}^N\sum\limits_{i=0}^{c-\lambda_k-1}B_{[\lambda+(c-1-i-\lambda_1)e_1+ie_k]}.\nonumber
\end{eqnarray}

After combining Equations (\ref{eqth1}) and (\ref{eqth2}) we get the desired result.
\end{proof}

\begin{corollary} Let $\lambda$ be a partition of length $\ell_{\lambda}<N$ with $\lambda_1>\lambda_k$ for $k=2,\dots,N$. Then,
\begin{eqnarray}\label{eqB2}
(a+b+\lambda_1-2)B_{[\lambda-e_1]}
-(a+\lambda_1-2)B_{[\lambda-2e_1]} = -\rho\sum_{k=2}^N \sum_{i=1}^{\lambda_1-\lambda_k} B_{[\lambda-ie_1+(i-1)e_k]} \\
+ \rho\sum_{k=2}^N \sum_{i=1}^{\lambda_1-\lambda_k-1} B_{[\lambda-(i+1)e_1+(i-1)e_k]}.\nonumber
\end{eqnarray}
\end{corollary}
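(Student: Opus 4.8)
The plan is to obtain the Corollary as the single specialization $c=\lambda_1-1$ of Theorem~\ref{thm}. This choice of $c$ is dictated by the shape of the claimed identity: the left-hand side of (\ref{eqB2}) involves only $B_{[\lambda-e_1]}$ and $B_{[\lambda-2e_1]}$, and these are exactly the terms that survive when $c$ is taken as large as the Theorem allows. So I would begin by writing out (\ref{eqB}) with $c$ replaced by $\lambda_1-1$ and then simplify each of its four sums in turn. On the left-hand side the range $i=c,\dots,\lambda_1-1$ collapses to the single index $i=\lambda_1-1$, so the first term becomes $(b-1)B_{[\lambda-e_1]}$; adding $(a-1+\lambda_1)B_{[\lambda-e_1]}$ produces the coefficient $(a+b+\lambda_1-2)$ of $B_{[\lambda-e_1]}$. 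The third term $-(a-1+c)B_{[\lambda+(c-1-\lambda_1)e_1]}$ becomes $-(a+\lambda_1-2)B_{[\lambda-2e_1]}$ since $c-1-\lambda_1=-2$, which reproduces the left-hand side of (\ref{eqB2}) verbatim.

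The substantive point is the hypothesis $\lambda_1>\lambda_k$ for $k=2,\dots,N$, which I would use to simplify the first double sum on the right-hand side of (\ref{eqB}). This hypothesis gives $\lambda_k\le\lambda_1-1=c$ for every $k$, so the inequality $\lambda_k>c$ never holds; hence $\max\{\lambda_k,c\}=c$ and $\min\{\lambda_k,c\}=\lambda_k$ throughout. For the indices with $\lambda_k=c$ the inner sum $\sum_{i=1}^{0}$ is empty and contributes nothing, while for $\lambda_k<c$ one has $\delta_{\lambda_k<c}=0$, so the sign $(-1)^{\delta_{\lambda_k<c}}$ equals $+1$. Substituting $\max\{\lambda_k,c\}=c$, $\min\{\lambda_k,c\}=\lambda_k$ and $c=\lambda_1-1$, the exponent of $e_1$ is $c-i-\lambda_1=-(i+1)$, the exponent of $e_k$ is $\lambda_k+i-1-\lambda_k=i-1$, and the upper summation limit is $c-\lambda_k=\lambda_1-\lambda_k-1$. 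This is precisely the term $+\rho\sum_{k=2}^N\sum_{i=1}^{\lambda_1-\lambda_k-1}B_{[\lambda-(i+1)e_1+(i-1)e_k]}$ appearing in (\ref{eqB2}). The remaining (second) double sum on the right of (\ref{eqB}) is already independent of $c$ and matches the first sum on the right of (\ref{eqB2}) without any change.

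Since every piece matches once these substitutions are made, combining them yields (\ref{eqB2}) directly. The only place where anything can go wrong is the treatment of the $\max/\min$ brackets together with the sign $(-1)^{\delta_{\lambda_k<c}}$, so the main (and essentially only) obstacle is to verify carefully that the strict inequalities $\lambda_1>\lambda_k$ eliminate the $\lambda_k>c$ branch and that the $\lambda_k=c$ indices drop out through empty inner sums. Once this bookkeeping is checked, the Corollary is a pure substitution into Theorem~\ref{thm} with no genuine computation required.
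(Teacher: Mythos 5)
Your proposal is correct and is exactly the paper's argument: the paper proves the corollary by the single substitution $c=\lambda_1-1$ in Eq.~(\ref{eqB}), and your detailed bookkeeping (collapsing the left-hand sum to $i=\lambda_1-1$, using $\lambda_k\le\lambda_1-1=c$ to force $\max\{\lambda_k,c\}=c$, $\min\{\lambda_k,c\}=\lambda_k$ and sign $+1$, with the $\lambda_k=c$ terms dropping out as empty sums) is just the verification the paper leaves implicit.
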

\begin{proof}
The proof of the corollary follows straightforwardly after replacing $c=\lambda_1-1$ in Eq.~(\ref{eqB}).
\end{proof}

In the next section we give some examples showing the usefulness of these results.

\section{Applications}

Let us recall that any Schur polynomial can be expressed in term of monomial symmetric polynomials as
\begin{equation}
s_\lambda(y)=\sum_{\mu\preceq\lambda}K_{\lambda\mu} m_\mu(y),
\end{equation}
where $K_{\lambda\mu}$ is the Kostka number associated with $\lambda$ and $\mu$ and $m_\mu(y)$ is the monomial symmetric polynomial indexed by $\mu$.

From the symmetry of the Selberg-Schur kernel under the permutation of any pair of variables it follows that
\begin{eqnarray}\label{eqJ1}
J_{\lambda}=\sum_{\mu\preceq\lambda} m_\mu(1^N) K_{\lambda\mu}B_\mu,
\end{eqnarray}
where we have exploit that $m_\mu(y)$ is a symmetric polynomial, thus proving that Theorem~\ref{thm} provides us with a well-defined algorithm for computing Selberg-Schur integral as it was announced in the introduction.

Let us illustrate this fact with some interesting examples.

\subsection{Partitions of the form $(31^n)$:} For a given positive integer $n$ let us consider the partition $(31^n)$. The partitions $\mu$ satisfying that $\mu\preceq (31^n)$ are: $(1^{n+3})$, $(21^{n+1})$ and $(31^n)$. Recall from \cite{Ao} that
\begin{equation}\label{eqb111}
B_{(1^n)}=J_0 \frac{[a+(N-1)\rho]^{(\rho)}_{(1^n)}}{[a+b+2(N-1)\rho]^{(\rho)}_{(1^n)}},
\end{equation}
and from \cite{Ig} that
\begin{equation}\label{eqb2211}
\begin{array}{l}
B_{(2^n1^{m})}=J_0\dfrac{[a+(N-1)\rho]_{(2^n1^{m})}^{(\rho)}}{[a+b+2(N-1)\rho]_{(2^n1^{m})}^{(\rho)}}\dfrac{[a+b+(N-2)\rho]_{(1^n)}^{(\rho)}}{[a+b+(2N-m-n-2)\rho]_{(1^n)}^{\rho}}\\
\qquad\qquad\qquad\qquad\qquad\qquad\times _3F_2\left[\begin{array}{c}-n,-N+m+n,\alpha+\beta+\gamma+2N-n-1\\\alpha+\beta+N-n-1,\alpha+\gamma+N-n\end{array}\right],
\end{array}
\end{equation}
for any positive integer $m$, where, as before, $\alpha=a/\rho$, $\beta=b/\rho$, $\gamma=1/\rho$ and the hypergeometric series $_3F_2$ is evaluated at $1$.

Noticing that
\begin{equation}\label{k1}
K_{(31^n),(1^{n+3})}=\frac{1}{2}\,\frac{(n+2)!}{n!},
\end{equation}
\begin{equation}\label{k2}
K_{(31^n),(21^{n+1})}=n+1,
\end{equation}
it follows from Eq.~(\ref{eqJ1}) that in order to find an expression for $J_{(31^n)}$ we only need to find an explicit formula for $B_{(31^n)}$.

Applying Eq.~(\ref{eqB}) for $\lambda=(41^n)$ and $c=2$ we obtain
\begin{eqnarray}
\rho\left[2(N-n-1)\left(B_{(1^{n+1})}-B_{(31^n)}-B_{(21^{n+1})}\right)-n\left(2B_{(31^n)}+B_{(2^21^{n-1})}\right)\right.  \\
~~~~~ ~~~~~ +\left.nB_{(1^{n+1})}\right]=(b-1)B_{(21^n)}+(b-1)B_{(31^n)}+(a+3)B_{(31^n)}-(a+1)B_{(1^{n+1})}, \nonumber
\end{eqnarray}
namely,
\begin{eqnarray}
B_{(31^n)}=\left[a+b+2+2\rho(N-1)\right]^{-1}\times\left[(a+1+\rho(2N-n-2))B_{(1^{n+1})} \right. \\
-\left.(b-1)B_{(21^n)}-2\rho(N-n-1)B_{(21^{n+1})}-\rho n B_{(2^21^{n-1})}\right]. \nonumber
\end{eqnarray}

Substitution of (\ref{eqb111}) and (\ref{eqb2211}) into this last expression will eventually lead us to the desired formula for $B_{(31^n)}$.

\subsection{Partitions of the form $(32^m)$:} Let $m$ be, again, a non-negative integer number and let us consider the case of the partition $(32^m)$. Despite of the lack of explicit expressions for the corresponding Kostka numbers, it is straightforward to find a recurrence for $B_{(32^m)}$.

In fact, applying Theorem~\ref{thm} for $\lambda=(42^m)$ with $c=3$ we have that
\begin{eqnarray}\label{eq32a1}
B_{(32^m)}=\left[a+b+2+2\rho(N-1)\right]^{-1} \left[(a+2+\rho(2N-m-2))B_{(2^{m+1})} \right. \\
+ \left. \rho(N-m-1)B_{(2^m1^2)}-2\rho(N-m-1)B_{(2^{m+1}1)}\right], \nonumber
\end{eqnarray}
from where a formula for $B_{(32^m)}$ can be read.

\subsection{Partitions of the form $(32^m1^n)$:} Finally let us discuss the case of partitions of the form $(32^m1^n)$ with $m,n > 0$. Applying again Theorem~\ref{thm} but now for $\lambda=(42^m1^n)$ and $c=3$ we get:
\begin{eqnarray}\label{eq32a1bis}
B_{(32^m1^n)}=\left[a+b+2+2\rho(N-1)\right]^{-1} \left[(a+2+\rho(2N-m-n-2))B_{(2^{m+1}1^n)} \right. \\
+ \left. \rho(N-m-n-1)B_{(2^m1^{n+2})}-n\rho B_{(2^{m+2}1^{n-1})}-2\rho(N-m-n-1)B_{(2^{m+1}1^{n+1})}\right]. \nonumber
\end{eqnarray}
As before, using (\ref{eqb111}) and (\ref{eqb2211}) an explicit formula for $B_{(32^m1^n)}$ can be derived.

\bigskip

{\bf Acknowledgments:} Both authors are grateful to the referees for their stimulating reports. SI would like to thank the High Energy Group of the Abdus Salam ICTP for the warm hospitality during the completion of this work. 


\end{document}